 \newcommand{\R}{\mathbb{R}}
\newcommand{\F}{\mathbb{F}} 
\newcommand{\Z}{\mathbb{Z}}
\newcommand{\0}{\mathbf{0}} 
\newcommand{\1}{\mathbf{1}} 
\renewcommand{\L}{\mathcal{L}^\star}
\renewcommand{\P}{\mathcal{P}^\star}
\newcommand{\fix}{\mathrm{Fix}}
\DeclareMathOperator{\spanOp}{span}
\newcommand{\lspan}{\spanOp}
\DeclareMathOperator{\lexOp}{lex}
\newcommand{\lex}{\lexOp}
\newcommand{\ip}[2]{\langle{#1},{#2}\rangle}
\newtheorem{theorem}{Theorem}
\newtheorem{lemma}[theorem]{Lemma}
\newtheorem{remark}[theorem]{Remark}
\author{Kanstantsin Pashkovich}
\title{Computing the Nucleolus of Weighted Voting Games in Pseudo-polynomial Time}
\affil{Department of Combinatorics and Optimization,\\ University of Waterloo,\\ 200 University Avenue West,
Waterloo, ON, Canada N2L 3G1\\
  \texttt{kpashkov@uwaterloo.ca}}
\begin{document}

\maketitle 

\begin{abstract}
  We provide an algorithm for computing the nucleolus for an  instance of a weighted voting game in pseudo-polynomial time. This resolves an open question posed by Elkind.~et.al. 2007. 
\end{abstract}

\section{Introduction}

The cooperative game theory studies the situation when individuals start to collaborate in order to achieve their goals. An instance of a cooperative game is given by a set $N$ of individuals, so called \emph{players}, and a \emph{utility function}, which assigns a value to every possible coalition of players. 

A main question of the cooperative game theory asks for a distribution of the utility value of the \emph{grand coalition} $N$ among all players so that the players are encouraged to cooperate, i.e. so that the incentives of players to leave the grand coalition are minimized. A natural way to address this question is to distribute  the utility value of the grand coalition in a ``fair" way among the players. \emph{Nucleolus} is the concept introduced by Schmeidler in~\cite{Schmeidler69}  and it can be seen as such a ``fair" and desirable distribution. It is not surprising that distributions corresponding to the nucleolus appeared long before the formal definition by Schmeidler and date back to Babylonian Talmud~\cite{Aumann85}. The nucleolus was extensively studied for different types of cooperative games: flow games~\cite{Xiaotie09}, shortest path games~\cite{Barahona17}, assignment games~\cite{Nunez04}, matching games~\cite{Kern03},~\cite{Pashkovich18},~\cite{Paulusma01}, neighbor games~\cite{Herbert03}, min-cost spanning tree games~\cite{Faigle98} etc. For some of these games the nucleolus can be computed efficiently, for some the nucleolus problem is $\mathcal {NP}$-hard.

In this paper, we study the problem of computing the nucleolus for \emph{weighted voting games}. In a weighted voting game with $n$ players, we have a non-negative integral \emph{threshold} $W$ and for each player we have an associated non-negative integral \emph{weight} $w_i$, $i=1,\ldots,n$. The weight $w_i$ of the player~$i$ indicates the contribution of this player to a coalition. The utility value~$\nu(S)$ of a player coalition $S\subseteq N$ equals $1$ if and only if $w(S)\geq W$, otherwise the utility value $\nu(S)$ equals $0$. Here, we use the notation $w(S):=\sum_{i\in S} w_i$. 

Weighted voting games model decision making processes in political bodies, where the players correspond to different parties and their weight corresponds to the number of delegates from these parties. Once the number of delegates voting in favour of some decision is larger than a given threshold, the decision is going to be implemented.

\subsection{Our Result}

Weighted voting games are notoriously hard for the classical concepts of ``fair" distributions, namely it is $\mathcal{NP}$-hard to compute a least core allocation and the nucleolus~\cite{Elkind2009}. On the positive note, the paper~\cite{Elkind2009} provides an algorithm to compute a least core allocation in pseudo-polynomial time. The question about a similar result for the nucleolus was stated as an open question in~\cite{Elkind2009}. The current paper answers this question positively. Moreover, our pseudo-polynomial running time algorithm can be generalized in a straightforward manner to games corresponding to the intersection of a constant number of weighted voting games. 

\subsection{History of the Question}
The question about an algorithm for computing the nucleolus of a weighted voting game in pseudo-polynomial time was posed in~\cite{Elkind2009}. Later in~\cite{Pasechnik2009}(SODA 2009)  a novel algorithm  appeared. It was claimed that  the algorithm in~\cite{Pasechnik2009} computes the nucleolus of an instance of a weighted voting game and that this algorithm runs in pseudo-polynomial time. The algorithm in~\cite{Pasechnik2009} is based on beautiful geometric and discrete ideas. Unfortunately, the algorithm in~\cite{Pasechnik2009} is not guaranteed to compute the nucleolus. Indeed, it is based on a sequence of linear programs, so that the nucleolus is an optimal solution of each of the linear programs. However, it is not known whether the final linear program in~\cite{Pasechnik2009}  has a unique optimal solution, and so it is not known whether the algorithm in~\cite{Pasechnik2009}  outputs the nucleolus. The fact that the algorithm in~\cite{Pasechnik2009} may lead to a wrong output was reported by Solymosi in Mathematical Reviews~\cite{Solymosi2009}. 
\begin{displayquote}
\emph{``\ldots Despite all of these noteworthy technical contributions to the computation of the nucleolus in general, and for weighted voting games in particular, the main claim that the presented procedure is a pseudopolynomial time algorithm is not justified. \ldots it seems that the pseudopolynomial computability of the nucleolus for weighted voting games remains an open problem."}

\smallskip
\emph{Tam\'{a}s Solymosi}
\end{displayquote}

\subsection{Our Approach}
To find the nucleolus we solve a sequence of linear programs. Our sequence of linear programs does not directly correspond to the well-known Maschler's scheme~\cite{Maschler79}, bur rather relaxes it. Even though in our scheme we might need to solve more linear programs than in the Maschler's scheme, the total number of linear programs in our scheme is bounded by the number of players. The final linear program in our scheme has a unique optimal solution, and this optimal solution is the nucleolus. 

 To solve the linear programs arising in our scheme, we use a dynamic programming approach inspired by~\cite{Pasechnik2009}.  However, to implement our scheme and to resolve the problem with the algorithm from~\cite{Pasechnik2009}, we need to incorporate ``linear independence" constraints in dynamic programs. A naive approach would not lead to dynamic programs of pseudo-polynomial size. We resolve this issue by substituting  a ``linear independence" constraint over $\R$ with several ``linear independence" constraints over fields $\F_p$, where $p$ ranges over a selected set of prime numbers such that the unitary encoding of each considered prime number $p$ is of  polynomial size in the number of players.

\section{Preliminaries}
A \emph{cooperative game} $(N,\nu)$ is defined by a set of players $N$ and a \emph{utility function} $\nu:2^N\rightarrow \R_+$.
A weighted voting game $(N,\nu)$ can be defined by a set of players $N$, a non-negative integral vector $w\in \Z_+^N$ and a non-negative integral threshold $W\in \Z_+$, where for each coalition of players $S\subseteq N$ we have
\[
\nu(S):=\begin{cases} 1 & \text{if}\quad w(S)\geq W\\
0& \text{otherwise}
\end{cases}\,.
\]

A central question in cooperative game theory is asking for a fair distribution of $\nu(N)$, i.e. a  distribution of the value of the \emph{grand coalition} among all players in a most ``fair" way. Below, we give one possible mathematical interpretation of this question.

For a cooperative game $(N,\nu)$, a vector $x\in \R^N$ is called an \emph{allocation} if $x\geq \0$ and $x(N)=\nu(N)$. Given an allocation $x\in \R^N$, we define the \emph{excess} of a coalition $S$ with respect to the allocation $x$ to be the value $x(S)-\nu(S)$. Intuitively, the smaller is the excess of a coalition $S$ the more  the coalition $S$ is inclined to leave the grand coalition. The next linear program reflects the goal of finding an allocation $x$, which maximizes the smallest excess with respect to $x$.
\begin{align*} \label{eq:coalition_leastcore}
\max ~~ & \varepsilon \tag{$P$} \\
\text{s.t.} ~~ & x(S) \geq \nu(S) + \varepsilon \qquad \text{for
                          all}\quad S \subseteq N \\
&  x(N) = \nu(N)\\
& x \geq \0.
\end{align*}
Let $\varepsilon^\star$ be the optimal value of the linear program~\eqref{eq:coalition_leastcore}. Let $P(\varepsilon^\star)$ be the set of allocations $x\in \R^N$ such that $(x,\varepsilon^\star)$ is feasible for the linear program~\eqref{eq:coalition_leastcore}. We call $P(\varepsilon^\star)$ the \emph{leastcore} of the game $(N,\nu)$.  More generally, given a linear program $Q$ in variables $\R^N\times \R$ let us denote by $Q(\varepsilon)$ the set of value assignments $x$ such that $(x,\varepsilon)$ is feasible for $Q$.

The \emph{excess vector} of an allocation $x\in \R^N$ is the vector 
\[
\Theta(x):=\big(x(S_1)-\nu(S_1),\,x(S_2)-\nu(S_2),\ldots,x(S_{2^{|N|}})-\nu(S_{2^{|N|}})\big)\,,
\]
where $\{S_1,\ldots, S_{2^{|N|}}\}$ is the list of all possible player coalitions so that
\[
x(S_1)-\nu(S_1)\leq x(S_2)-\nu(S_2)\leq\ldots\leq x(S_{2^{|N|}})-\nu(S_{2^{|N|}})\,.
\]
Thus, the excess vector lists excess values of all player coalitions, starting from the smallest excess and ending with the largest excess with respect to the given allocation $x$.

The \emph{nucleolus} is the allocation achieving the lexicographically maximum excess vector, i.e. the nucleolus can be defined as follows
\[
\arg \lex \max \{\Theta(x)\,:\,x\in \R^N,\, x(N)=\nu(N),\, x\geq \0\}\,.
\]
The concept of nucleolus was introduced in~\cite{Schmeidler69}. In~\cite{Schmeidler69}  it was shown that the nucleolus is well defined and is unique.

Clearly, allocations in the leastcore $P(\varepsilon^\star)$ maximize the smallest excess. However, generally leastcore allocations do not attempt to maximize other excesses except the smallest one. In contrast to the leastcore allocations, the nucleolus takes into account excess values of all player coalitions: first maximizing the smallest excess of a player coalition, then maximizing the second smallest excess and so on.

\begin{theorem}[Main Theorem]\label{th:main-result}
The nucleolus of a weighted voting game with $n$ players, weights $w_i\in \Z_+$, $i=1,\ldots, n$ and threshold $W\in \Z_+$ can be computed in time polynomial in $n$, $w_i$, $i=1,\ldots, n$ and $W$. 
\end{theorem}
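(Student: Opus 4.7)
The plan is to follow a Maschler-style iterative scheme, but with two crucial adaptations: (i) we \emph{relax} Maschler's rule for what to fix between iterations, and (ii) we solve each linear program by dynamic programming over the weight axis rather than writing down its $2^n$ constraints explicitly. At iteration $k$ we maintain a linear program $Q_k$ whose feasible region is an affine subset of that of $Q_{k-1}$; we compute its largest smallest-excess $\varepsilon_k^\star$, identify the coalitions whose excess is tight, and form $Q_{k+1}$ by fixing a carefully chosen linearly independent subfamily of those constraints at equality. The invariant is that $\dim(\mathrm{aff}\,Q_k(\varepsilon_k^\star))$ strictly decreases; since the feasible region of $Q_0$ lives in the budget hyperplane $x(N)=\nu(N)$ it has dimension at most $n-1$, so the scheme terminates after at most $n$ iterations in a singleton, which by construction must be the nucleolus.

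First, I would formalize this relaxed scheme and verify the two properties it must satisfy: the nucleolus is feasible in every $Q_k$ (so the final singleton is the nucleolus), and the dimension drops by at least one per iteration. This is the conceptual, non-algorithmic half of the proof. It replaces the requirement of fixing \emph{all} tight constraints (as in Maschler's original scheme) by the weaker requirement of fixing enough to cut the affine dimension, which is what allows the rest of the algorithm to proceed via dynamic programming; and it is this relaxation that is used to sidestep the uniqueness issue that derailed the algorithm of~\cite{Pasechnik2009}.

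Second, I would show how to solve each $Q_k$ in time polynomial in $n$ and $w(N)$. The structural observation is that coalitions $S,T$ with $w(S)=w(T)$ have $\nu(S)=\nu(T)$, so their excess constraints are parameterized only by the integer weight $W'\in\{0,\dots,w(N)\}$. A dynamic program indexed by $W'$ — inspired by~\cite{Pasechnik2009} — determines, for each weight level, the best (smallest) excess attainable and whether some coalition of that weight is tight. To drive the relaxed Maschler scheme we need more: we must decide, inside the DP, whether a new tight constraint can be chosen that is linearly independent of the equalities already enforced, and if so return the corresponding characteristic vector.

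The main obstacle, and the novel contribution of the paper, is that tracking $\R$-linear independence of characteristic vectors inside the DP would blow up the state space, since the set of possible partial sums is exponential. The idea is to replace $\R$-linear independence by simultaneous $\F_p$-linear independence for a polynomial-size list of primes $p$ whose \emph{unary} length is polynomial in $n$; over each such $\F_p$, independence is detected by maintaining the $\F_p$-rank of the accumulated characteristic vectors inside the DP state, which multiplies the state space by only a polynomial factor. The hard technical steps will be (a) choosing the primes so that simultaneous $\F_p$-independence across the list certifies the $\R$-independence that the Maschler-style dimension-drop argument really needs, and (b) proving that, when the relaxed scheme is driven by these $\F_p$-based independence tests, the invariant from paragraph one still holds, so that the terminal LP has a unique optimum equal to the nucleolus. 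Putting the two halves together gives the running time in Theorem~\ref{th:main-result}: a pseudo-polynomial DP solved at most $n$ times.
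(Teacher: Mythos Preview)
Your overall architecture matches the paper's: a relaxed Maschler scheme that increases the dimension of a fixed linear subspace $\L_j$ by exactly one per round (so at most $n$ rounds), combined with a weight-indexed DP for separation, and the $\F_p$ trick to make linear-independence testing DP-compatible. Two technical points, however, are stated in a way that would not work as written.

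First, the quantifier over primes is \emph{existential}, not ``simultaneous''. The paper's key lemma says that $0/1$-vectors are $\R$-independent iff they are $\F_p$-independent for \emph{some} $p$ in a set $\P$ of at least $\log_2(n!)$ primes; the proof bounds a nonzero integer $k\times k$ minor by $k!$ and observes it cannot be divisible by every $p\in\P$. Independence over a single $\F_p$ already certifies $\R$-independence trivially; the list of primes is needed only for the converse. If you required independence over \emph{all} $p\in\P$, you would omit from $(P_j)$ the constraint for some $S$ with $\chi(S)\notin\L_{j-1}$ over $\R$, and the reformulated LP would no longer coincide with $(P_j)$.

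Second, ``maintaining the $\F_p$-rank inside the DP state'' does not give a polynomial state space: deciding whether a partially built $\chi(S\cap\{1,\dots,k\})$ lies in $\lspan_{\F_p}\{z^{j-1}_1,\dots,z^{j-1}_{j-1}\}$ requires its full image in the $\F_p$-orthogonal complement, which has $p^{\,n-(j-1)}$ possible values. The paper's device is different: for each $p\in\P$ it precomputes a basis $v^{j-1}_{p,1},\dots,v^{j-1}_{p,n-(j-1)}$ of that orthogonal complement, rewrites the condition $\chi(S)\notin\L_{j-1}$ as ``$\ip{v^{j-1}_{p,t}}{\chi(S)}\not\equiv 0\pmod p$ for some $(p,t)$'', and then runs a \emph{separate} DP for each pair $(p,t)$ whose state carries only the single residue $g=\ip{v^{j-1}_{p,t}}{\chi(S\cap\{1,\dots,k\})}\in\F_p$ in addition to $k$ and the running weight $U$. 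That costs a factor of $p$ per DP and $O(|\P|\cdot n)$ DPs altogether, all pseudo-polynomial since the primes are found with unary size polynomial in $n$. This per-coordinate decoupling of the orthogonal complement is the missing idea in your plan.
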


The proof of Theorem~\ref{th:main-result} is based on a sequence of linear program, which we describe in Section~\ref{sec:maschler} and which constitutes a relaxation of the well-known Maschler's scheme.   In Section~\ref{sec:algorithm}, we explain how for an instance of a weighted voting game our scheme can be executed in time polynomial in $n$, $w_i$, $i=1,\ldots, n$ and $W$.

\section{Our Scheme for Cooperative Games}\label{sec:maschler}

Our algorithm is based on an efficient way to solve a sequence of linear programs. As the Maschler's scheme, our scheme requires us to solve a polynomial number of linear programs, where each linear program is of (potentially) exponential size. In this section, we provide a detailed overview of how the scheme works. We would like to note that the scheme, stated in this section, leads to the nucleolus of any cooperative game~$(N,\nu)$.

Given a cooperative game $(N,\nu)$, let us define $(P_1)$ to be the linear program~\eqref{eq:coalition_leastcore} and $\varepsilon_1$ to be the optimal value of~$(P_1)$, i.e. let us define $\varepsilon_1$ to be $\varepsilon^\star$. For $j=2,\ldots, n$ let us introduce the following linear program with the optimal value $\varepsilon_{j}$
\begin{align*} \label{eq:mashler_lp}
\max ~~ & \varepsilon \tag{$P_j$} \\
\text{s.t.} ~~ & x(S) \geq \nu(S) + \varepsilon \qquad \text{for
                          all}\quad S \subseteq N,\quad \chi(S) \not\in  \L_{j-1}\\                        
&  x\in P_{j-1}(\varepsilon_{j-1})\,,
\end{align*}
where we define
\[
\L_{1}:= \lspan\{\chi(N)\}
\]
and for $j=2,\ldots,n$, we define $\L_{j}$ to be a linear subspace of
\[
 \lspan\{\chi(S)\,:\,S\in \fix(P_{j}(\varepsilon_{j}))\}
\]
such that $\L_{j}\supseteq \L_{j-1}$ and $\dim(\L_{j})=j$. Here, $\fix(P_{j}(\varepsilon_{j}))$, $j=1,\ldots,n$ is defined as follows
\[
\fix(P_{j}(\varepsilon_{j})):=\{S\subseteq N\,:\, x(S)=x'(S)\quad\text{for all}\quad x,x'\in  P_{j}(\varepsilon_{j})\}\,.
\] 
Note, that
\[
\dim (\L_{1})=1 \qquad\text{and}\qquad \L_{1}\subseteq \lspan\{\chi(S)\,:\,S\in \fix(P_{1}(\varepsilon_{1}))\}\,.
\]

\begin{lemma}\label{lem:well_defined_scheme}
For $j=1,\ldots, n$,  the linear program \eqref{eq:mashler_lp} is well defined.
\end{lemma}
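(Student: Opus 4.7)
The plan is to proceed by strong induction on $j$, strengthened so that the step closes: I shall prove jointly that for each $j \in \{1,\ldots,n\}$ the linear program $(P_j)$ has a finite optimal value $\varepsilon_j$, and that a subspace $\L_j$ satisfying $\L_{j-1} \subseteq \L_j \subseteq \lspan\{\chi(S) : S \in \fix(P_j(\varepsilon_j))\}$ with $\dim \L_j = j$ can be chosen. The strengthening is forced on us: the definition of $(P_j)$ requires $\L_{j-1}$, while the existence of $\L_j$ requires $(P_j)$ already solved.

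The base case $j=1$ is routine. The LP $(P_1)$ is the leastcore LP: it is feasible (any nonnegative $x$ with $x(N) = \nu(N)$ together with a sufficiently small $\varepsilon$) and bounded above by $0$ via $x(N)=\nu(N)$ combined with the constraint $x(N) \geq \nu(N) + \varepsilon$. The equality $x(N)=\nu(N)$ forces $N \in \fix(P_1(\varepsilon_1))$, so the declared $\L_1 = \lspan\{\chi(N)\}$ lies in the ambient span and has dimension $1$.

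In the inductive step ($j \geq 2$), I first verify that $(P_j)$ is well defined. Feasibility is inherited from the nonemptiness of $P_{j-1}(\varepsilon_{j-1})$ paired with a sufficiently small $\varepsilon$. Boundedness uses $\dim \L_{j-1} = j-1 < n$, which forces some $S_0 \subseteq N$ with $\chi(S_0) \notin \L_{j-1}$; then $\varepsilon \leq x(S_0) - \nu(S_0) \leq \nu(N)$ caps the objective. Writing $\mathcal{F}_j := \lspan\{\chi(S) : S \in \fix(P_j(\varepsilon_j))\}$, the containment $\L_{j-1} \subseteq \mathcal{F}_j$ follows by monotonicity: $P_j(\varepsilon_j) \subseteq P_{j-1}(\varepsilon_{j-1})$ yields $\fix(P_{j-1}(\varepsilon_{j-1})) \subseteq \fix(P_j(\varepsilon_j))$, so $\L_{j-1} \subseteq \mathcal{F}_{j-1} \subseteq \mathcal{F}_j$ by the inductive hypothesis.

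The one nontrivial point, which I treat as the main obstacle, is the strict dimension increase $\dim \mathcal{F}_j \geq j$, i.e., producing some $S^* \in \fix(P_j(\varepsilon_j))$ with $\chi(S^*) \notin \L_{j-1}$. My plan is a convex averaging argument by contradiction. Suppose no such $S^*$ exists; then for every $S$ in the finite family $\mathcal{S} := \{S \subseteq N : \chi(S) \notin \L_{j-1}\}$ there is some $x_S \in P_j(\varepsilon_j)$ with $x_S(S) > \nu(S) + \varepsilon_j$. Form $\bar x := \tfrac{1}{|\mathcal{S}|} \sum_{S \in \mathcal{S}} x_S$; by convexity $\bar x \in P_j(\varepsilon_j)$, and for each $T \in \mathcal{S}$ one summand contributes a strict inequality while the others contribute weak ones, so $\bar x(T) > \nu(T) + \varepsilon_j$. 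Setting $\delta := \min_{T \in \mathcal{S}}(\bar x(T) - \nu(T) - \varepsilon_j) > 0$, the pair $(\bar x, \varepsilon_j + \delta)$ is feasible for $(P_j)$, contradicting the optimality of $\varepsilon_j$. Thus such an $S^*$ exists, and any $j$-dimensional subspace of $\mathcal{F}_j$ containing $\L_{j-1} + \lspan\{\chi(S^*)\}$ serves as $\L_j$, closing the induction.
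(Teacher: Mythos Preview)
Your proof is correct and follows essentially the same inductive skeleton as the paper's own argument. You are more explicit in two places the paper glosses over: you spell out $\L_{j-1}\subseteq\lspan\{\chi(S):S\in\fix(P_j(\varepsilon_j))\}$ via the monotonicity $P_j(\varepsilon_j)\subseteq P_{j-1}(\varepsilon_{j-1})$, and you justify the existence of a new tight set $S^*$ by the convex averaging contradiction, whereas the paper simply asserts that boundedness of $(P_{k+1})$ yields such an $S_{k+1}$.
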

\begin{proof}
Clearly, for $j=1$ the linear program \eqref{eq:mashler_lp} is well defined. To prove that the linear program \eqref{eq:mashler_lp} is well defined for $j=2,\ldots,n$, it is enough to show that there exists a choice of $\L_j$ for $j=2,\ldots,n$, satisfying 
\begin{equation}\label{eq:well_defined_scheme}
\begin{aligned}
 &\L_j \subseteq \lspan\{\chi(S)\,:\,S\in \fix(P_{j}(\varepsilon_{j}))\},\\
 & \L_{j}\supseteq \L_{j-1} \qquad \text{and}\qquad \dim(\L_{j})=j\,.
\end{aligned}
\end{equation}
Let us prove this by induction. For $j=1$, the above statements trivially hold by the choice of $\L_1$ and $(P_1)$. Let us assume that we have $\L_j$ satisfying the statements~\eqref{eq:well_defined_scheme} for $j=1,\ldots,k$, $k<n$. The linear program $P_{k+1}$ is not unbounded, because $\dim(\L_{k})=k<n$  and the linear program $P_{k+1}$ implicitly includes constraints $x(N)=\nu(N)$, $x\geq \0$. Hence, there exists $S_{k+1}\subseteq N$,  $\chi(S_{k+1}) \not\in  \L_{k}$ such that for all allocations $x$ in $P_{k+1}(\varepsilon_{k+1})$ we have $x(S_{k+1})=\nu(S_{k+1})+\varepsilon_{k+1}$. Thus, we have
\[
\chi(S_{k+1})\not\in \L_{k}\qquad \text{and}\qquad S_{k+1}\in \fix(P_{k+1}(\varepsilon_{k+1}))\,.
\]
Now, it  is clear that $\L_{k+1}:=\lspan\{\chi(S_{k+1})\cup \L_{k}\}$ satisfies the necessary conditions~\eqref{eq:well_defined_scheme}, finishing the proof.
\end{proof}

\begin{lemma}
The nucleolus of a cooperative game $(N,\nu)$ equals  $P_n(\varepsilon_n)$.
\end{lemma}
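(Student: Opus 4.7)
The plan is to establish two facts: $P_n(\varepsilon_n)$ is a singleton, and the nucleolus $x^\star$ belongs to $P_n(\varepsilon_n)$.

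For the singleton claim, I would use that $\dim(\mathcal{L}_n) = n$ together with the inclusion $\mathcal{L}_n \subseteq \lspan\{\chi(S) : S \in \fix(P_n(\varepsilon_n))\}$: picking a basis $\chi(S_1), \ldots, \chi(S_n)$ of $\mathcal{L}_n$ with each $S_i \in \fix(P_n(\varepsilon_n))$, every allocation in $P_n(\varepsilon_n)$ satisfies the $n$ linearly independent linear equations $x(S_i) = c_i$, where $c_i$ is the common value attained on $P_n(\varepsilon_n)$. These determine $x \in \R^N$ uniquely, so $|P_n(\varepsilon_n)| \leq 1$; Lemma~\ref{lem:well_defined_scheme} provides non-emptiness.

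For the containment $x^\star \in P_n(\varepsilon_n)$, I would induct on $j$, showing $x^\star \in P_j(\varepsilon_j)$ for every $j = 1,\ldots,n$. The base $j=1$ is Schmeidler's classical fact that the nucleolus lies in the leastcore. For the inductive step, assume $x^\star \in P_{j-1}(\varepsilon_{j-1})$ but $x^\star \notin P_j(\varepsilon_j)$, pick $y \in P_j(\varepsilon_j) \subseteq P_{j-1}(\varepsilon_{j-1})$, and form the convex combination $x_\lambda := (1-\lambda) x^\star + \lambda y \in P_{j-1}(\varepsilon_{j-1})$ for small $\lambda > 0$. Two observations drive the contradiction. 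First, whenever $\chi(S) \in \mathcal{L}_{j-1}$, the value $x(S)$ is constant across $P_{j-1}(\varepsilon_{j-1})$ (expand $\chi(S)$ in a basis of $\mathcal{L}_{j-1}$ drawn from $\{\chi(T) : T \in \fix(P_{j-1}(\varepsilon_{j-1}))\}$), so $x^\star$, $y$, and $x_\lambda$ agree on such $S$. Second, setting $\bar\varepsilon := \min\{x^\star(S) - \nu(S) : \chi(S) \notin \mathcal{L}_{j-1}\}$, the hypothesis gives $\bar\varepsilon < \varepsilon_j$, while $y(S) - \nu(S) \geq \varepsilon_j > \bar\varepsilon$ for every $S$ with $\chi(S) \notin \mathcal{L}_{j-1}$.

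A level-by-level comparison of $\Theta(x^\star)$ and $\Theta(x_\lambda)$ closes the induction. Any $S$ with $x^\star(S) - \nu(S) < \bar\varepsilon$ necessarily has $\chi(S) \in \mathcal{L}_{j-1}$, so its excess is unchanged under $x_\lambda$; at the level $\bar\varepsilon$, sets with $\chi(S) \in \mathcal{L}_{j-1}$ keep excess $\bar\varepsilon$, while sets with $\chi(S) \notin \mathcal{L}_{j-1}$---at least one of which realizes $\bar\varepsilon$ by the definition of $\bar\varepsilon$---have excess strictly greater than $\bar\varepsilon$ under $x_\lambda$. For $\lambda$ small enough that the sorted positions of distinct $x^\star$-excess levels remain separated in $\Theta(x_\lambda)$, this yields $\Theta(x_\lambda) >_\lex \Theta(x^\star)$, contradicting the lex-maximality of $x^\star$. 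The main obstacle I foresee is precisely this final bookkeeping: one must verify that for sufficiently small $\lambda$ the perturbed excesses do not bleed between adjacent $x^\star$-excess levels, so that the strict lex gain at the level $\bar\varepsilon$ is preserved in the sorted vector.
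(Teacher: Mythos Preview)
Your proposal is correct and follows exactly the paper's strategy: establish that $P_n(\varepsilon_n)$ is a singleton via $\dim(\mathcal{L}^\star_n)=n$ and the inclusion in $\lspan\{\chi(S):S\in\fix(P_n(\varepsilon_n))\}$, and then show that the nucleolus lies in every $P_j(\varepsilon_j)$. The paper merely asserts the latter fact, whereas you supply the standard inductive argument via a convex perturbation and lexicographic comparison; the bookkeeping concern you flag is real but is handled by continuity, since for sufficiently small $\lambda$ every coalition with $x^\star$-excess strictly above $\bar\varepsilon$ retains excess strictly above $\bar\varepsilon$ under $x_\lambda$.
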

\begin{proof}
Clearly, $\dim(\L_n)=n$ and $\L_n\subseteq  \lspan\{\chi(S)\,:\,S\in \fix(P_n(\varepsilon_n))\}$. By the definition of $\fix(P_n(\varepsilon_n))$, we have that $P_n(\varepsilon_n)$ consists of a single point. It remains to show that the nucleolus of $(N,\nu)$ coincides with the unique point in $P_n(\varepsilon_n)$. This is due to the fact that the nucleolus of $(N,\nu)$ lies in $P_j(\varepsilon_j)$ for all $j=1,\ldots,n$.
\end{proof}

\section{Our Scheme for Weighted Voting Games}

In our algorithm, we solve the separation problem  for the linear programs~\eqref{eq:mashler_lp}, $j=1,\ldots,n$ implicitly. Namely, in the case of weighted voting games we reformulate the linear programs~\eqref{eq:mashler_lp}, $j=1,\ldots,n$ and then solve the separation problem for our reformulation in an efficient way. To reformulate the linear programs~\eqref{eq:mashler_lp}, we need the following lemma. 

\begin{lemma}\label{lem:lin_dependence}
Given a set of different prime numbers $\P$,  $|\P|\geq \log_2(n!)$, $n\geq 3$ and a set of $0/1$-vectors $z_1, \ldots, z_k\in \{0,1\}^n$, we have that $z_1, \ldots, z_k$ are linearly independent over $\R$ if and only if $z_1, \ldots, z_k$ are linearly independent over $\F_p$ for some $p\in \P$.
\end{lemma}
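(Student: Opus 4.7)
The reverse direction, that linear independence over some $\F_p$ implies linear independence over $\R$, is the easy one, and I would handle it by contrapositive. If the $z_i$ are linearly dependent over $\R$, then because the entries of the $z_i$ are rational, the kernel of the map $c\mapsto \sum_i c_i z_i$ is a rational subspace of $\R^k$, so it contains a nonzero rational, hence integral, vector; clearing a greatest common divisor yields integers $c_1,\ldots,c_k$ with $\gcd(c_1,\ldots,c_k)=1$ and $\sum_i c_i z_i=\0$. For every prime $p$ the coprimality ensures that not all $c_i$ vanish modulo $p$, so reducing the identity modulo $p$ produces a nontrivial $\F_p$-linear dependence; contrapositively, $\F_p$-linear independence for some $p$ forces $\R$-linear independence.

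For the forward direction I would pass to determinants. Form the $k\times n$ matrix $M$ whose rows are $z_1,\ldots,z_k$; linear independence over $\R$ is equivalent to $M$ having rank $k$, which is equivalent to the existence of a $k\times k$ minor $D\neq 0$. For any prime $p$ with $p\nmid D$, the same submatrix, viewed over $\F_p$, has determinant $D\bmod p\neq 0$ and is therefore nonsingular, which gives linear independence of the $z_i$ over $\F_p$. It therefore suffices to exhibit one $p\in\P$ not dividing $D$, and my plan is to do this by bounding $|D|$ and counting its distinct prime divisors.

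Since each row of $M$ has at most $k$ ones, Hadamard's inequality gives $|D|\leq k^{k/2}\leq n^{n/2}$. Pairing factors in $n!$ yields $(n!)^2=\prod_{i=1}^n i(n+1-i)=\prod_{i=1}^n\bigl(n+(i-1)(n-i)\bigr)\geq n^n$, with strict inequality whenever $n\geq 3$, since some interior factor $(i-1)(n-i)$ is strictly positive. Hence $n!>n^{n/2}\geq|D|$. Any positive integer $N$ has at most $\log_2 N$ distinct prime divisors (each is at least $2$), so $|D|$ has at most $\log_2|D|<\log_2(n!)\leq|\P|$ such divisors, and by pigeonhole at least one $p\in\P$ fails to divide $D$.

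The main delicacy is obtaining the \emph{strict} inequality $|D|<n!$ rather than merely $|D|\leq n!$; without strictness one cannot conclude strict inequality in the prime-count, and in principle every prime in $\P$ could divide $D$. This is exactly where the hypothesis $n\geq 3$ is used, since for $n\leq 2$ the Hadamard/pairing bound is tight and the counting argument would fail.
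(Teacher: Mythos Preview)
Your argument is correct and follows the same skeleton as the paper's: pick a nonzero $k\times k$ minor of the matrix of the $z_i$, bound its absolute value, and conclude that the primes in $\P$ cannot all divide it. The only noteworthy difference is in how the size bound is obtained: the paper uses the trivial Leibniz bound $|\alpha|\le k!\le n!$ (each of the $k!$ terms in the permutation expansion is $0$ or $\pm1$) and then gets the needed strict inequality on the prime side via $\prod_{p\in\P}p>2^{|\P|}\ge n!$, whereas you detour through Hadamard's inequality and the estimate $n!>n^{n/2}$ to force $|D|<n!$ strictly before counting prime divisors. Both routes are valid; the paper's is a little shorter since the Leibniz bound makes Hadamard and the pairing identity unnecessary. (One small wording slip: it is the rows of the chosen $k\times k$ \emph{submatrix}, not of $M$ itself, that have at most $k$ ones; the bound you state is nonetheless correct.)
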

\begin{proof}
Clearly, if $z_1, \ldots, z_k$ are linearly independent over $\F_p$ for some $p\in \P$ then $z_1, \ldots, z_k$ are linearly independent over $\R$. Let us show the other implication.

Let us assume that $z_1, \ldots, z_k$ are linearly independent over $\R$ but are linearly dependent over $\F_p$ for all $p\in \P$. Thus, $k\leq n$ and there exists a $k\times k$ minor $\alpha$, $\alpha\neq 0$ in the matrix with the columns defined by  $z_1, \ldots, z_k$. On one side, $\alpha$ is an integer and $|\alpha|\leq k!\leq n!$, since the vectors $z_1, \ldots, z_k$ are $0/1$-vectors. On the other side, $\alpha\equiv 0 \pmod p$ for all $p\in \P$, since $z_1, \ldots, z_k$ are linearly dependent over $\F_p$ for all $p\in \P$. Hence, we have
\[
n!< \prod_{p\in \P}p\leq |\alpha|\leq n!\,,
\]
where the first inequality is due to the fact $|\P|\geq \log_2(n!)$, leading to a contradiction and finishing the proof.
\end{proof}

Now, let us assume that we are given a set of different prime numbers $\P$,  $|\P|\geq \log_2(n!)$ and $n\geq 3$.
Using Lemma~\ref{lem:lin_dependence}, we can reformulate~\eqref{eq:mashler_lp}, $j=2,\ldots,n$ as follows 
\begin{align*} \label{eq:mashler_lp_prime}
\max ~~ & \varepsilon \tag{$\tilde{P}_j$} \\
\text{s.t.} ~~ & x(S) \geq \nu(S) + \varepsilon \qquad \text{for
                          all}\quad S \subseteq N\\&\qquad\text{such that}\,  z^{j-1}_1, \ldots, z^{j-1}_{j-1}, \chi(S)\,\text{are}\\&\qquad\text{linearly independent over}\,\F_p\,\text{for some}\, p\in \P  \\                        
&  x\in P_{j-1}(\varepsilon_{j-1})\,,
\end{align*}
where for $j=1,\ldots, n$, $z^j_1, \ldots, z^j_j\in\{0,1\}^n$  are such that 
\[
\L_j=\lspan\{z^j_1, \ldots, z^j_j\}\,.
\]
Please see Section~\ref{sec:alg_encode} for further details on the vectors $z^j_1, \ldots, z^j_{j}\in\{0,1\}^n$ for $j=1,\ldots, n$.

However, the reformulations~\eqref{eq:mashler_lp_prime}, $j=2,\ldots,n$  are not enough for our purposes. For $p\in \P$ and $j=1,\ldots,n-1$, let us denote by $v^j_{p,1}, \ldots, v^j_{p,n-j}\in \F_p^n$ a basis of the linear space over $\F_p$, which is orthogonal to the linear space spanned by $z^j_1, \ldots, z^j_j$ over  $\F_p$. Note, that in the case when $z^j_1, \ldots, z^j_j\in \{0,1\}^n$  are linearly dependent over $\F_p$ for $p\in \P$ and $j=1,\ldots,n-1$, we let $v^j_{p,1}:=\0, \ldots, v^j_{p,n-j}:=\0$.

Then we can reformulate~\eqref{eq:mashler_lp_prime}, $j=2,\ldots,n$ as follows 
\begin{align*} \label{eq:mashler_lp_final}
\max ~~ & \varepsilon \tag{$\bar{P}_j$} \\
\text{s.t.} ~~ & x(S) \geq \nu(S) + \varepsilon \qquad \text{for
                          all}\quad S \subseteq N\\&\qquad\text{such that}\quad\ip{v^{j-1}_{p,t}}{\chi(S)}\not\equiv 0\pmod p\\&\qquad\text{for some}\, p\in \P\,\text{ and }\, t=1,\ldots, n-(j-1)\\                        
&  x\in P_{j-1}(\varepsilon_{j-1})\,,
\end{align*}
where $\ip{v}{u}$ stands for the scalar product of two vectors $v$ and $u$. For the sake of exposition, we let $(\bar{P}_1)$ to be the linear program $(P_1)$.

\bigskip

In Section~\ref{sec:algorithm}, we explain how to find a desired set $\P$ and compute the vectors $v^j_{p,1}, \ldots, v^j_{p,n-j}\in \F_p^n$  for $j=1,\ldots, n$ and $p\in \P$.

\section{Algorithm}\label{sec:algorithm}

To prove Theorem~\ref{th:main-result}, we show how to solve the linear programs ~\eqref{eq:mashler_lp_final}, $j=1,\ldots,n$ in time polynomial in $n$, $w_i$, $i=1,\ldots, n$ and $W$. 

\subsection{Finding Set of Prime Numbers $\P$}\label{sec:alg_prime_numbers}

Let us compute the set $\P$, $|\P|\geq\log_2(n!)$ of different prime numbers in polynomial time, such that the size of the unitary encoding for each number in $\P$ is polynomial in $n$. A naive way of finding such set $\P$ works: inspect numbers one by one starting from $2$ until the required set of  $\lceil\log_2(n!)\rceil$ smallest prime numbers is found. Indeed, the Prime Number Theorem guarantees that such a naive search for a set $\P$ finishes in time polynomial in~$n$. Thus we get the following remark.

\begin{remark}\label{rem:alg_prime_numbers}
A set $\P$ of at least  $\log_2(n!)$ different prime numbers, such that the size of the unitary encoding for each number in $\P$ is polynomial in $n$, exists  and can be found in time polynomial in $n$. 
\end{remark}

\subsection{Encoding Linear Spaces $\L_j$, $j=1,\ldots,n$ }\label{sec:alg_encode}

In our algorithm, a linear space $\L_j$, $j=1,\ldots,n$ is a linear space spanned by $0/1$-vectors. We encode  $\L_j$, $j=1,\ldots,n$ as the set $$z^j_1, \ldots, z^j_j\in\{0,1\}^n$$ such that $\L_j=\lspan\{z^j_1, \ldots, z^j_j\}$. For simplicity of exposition, we define $\L_0:=\{\0\}$ and for it we use the trivial encoding. For $\L_1=\lspan\{\1\}$ we use the encoding $z^1_1:=\1$. Note, the encoding size of each $\L_j$, $j=0,1,\ldots,n$ is polynomial in $n$.

\subsection{Computing Basis of Orthogonal Space over $\F_p$, $p\in \P$}

For $j=0,1,\ldots,n$ and $p\in \P$, if we are given the vectors $z^j_1, \ldots, z^j_{j}\in\{0,1\}^n$ then it is straightforward to compute a basis of the linear space orthogonal to $\lspan\{z^j_1, \ldots, z^j_{j}\}$, where all the vectors are considered over the field $\F_p$.

\begin{remark}\label{rem:alg_basis_orth}
For $j=0,1,\ldots,n$ and $p\in \P$, the vectors $v^j_{p,1}, \ldots, v^j_{p,n-j}\in \F_p^n$ can be computed in time polynomial in $n$.
\end{remark}

\subsection{Solving Separation Problem for~\eqref{eq:mashler_lp_final}, $j=1,\ldots,n$}\label{sec:alg_separation}

To show that the linear program~\eqref{eq:mashler_lp_final}, $j=1,\ldots,n$  can be solved in time polynomial in $n$, $w_i$, $i=1,\ldots,n$ and $W$ it is enough to show that the separation problem for~\eqref{eq:mashler_lp_final}, $j=1,\ldots,n$ and a point $(x,\varepsilon)$ can be solved in time polynomial in $n$, $w_i$, $i=1,\ldots,n$, $W$ and the size of the binary encoding of $(x,\varepsilon)$~\cite{Grotschel88}. 

Following the approach in~\cite{Elkind2009}, we solve the separation problem for~\eqref{eq:mashler_lp_final}, $j=1,\ldots,n$ using dynamic programming.

\begin{lemma}\label{lem:separation}
For the linear program~\eqref{eq:mashler_lp_final}, $j=1,\ldots,n$  and a point $(x,\varepsilon)$, the separation problem can be solved in time polynomial in $n$, $w_i$, $i=1,\ldots,n$, $W$ and the size of the binary encoding of $(x,\varepsilon)$. 
\end{lemma}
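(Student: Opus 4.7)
The plan is to proceed by induction on $j$. For the base case $j=1$, the linear program $(\bar{P}_1)$ coincides with $(P_1)$, and its separation is handled by the pseudo-polynomial knapsack-style dynamic program of Elkind et al.~\cite{Elkind2009}: states are pairs $(i,w)$ tracking the number of players processed and the accumulated weight, with weights at least $W$ collapsed into one bucket so that $\nu$ becomes a function of the state.

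For the inductive step $j \ge 2$, I would split the constraints of $(\bar{P}_j)$ into two groups. The membership $x \in P_{j-1}(\varepsilon_{j-1})$ unfolds by definition to the statement that $(x, \varepsilon_{j-1})$ is feasible for $(\bar{P}_{j-1})$, so the inductive hypothesis handles this group within the claimed time. The remaining group consists of the inequalities $x(S) \ge \nu(S) + \varepsilon$ indexed by those $S$ for which $\ip{v^{j-1}_{p,t}}{\chi(S)} \not\equiv 0 \pmod{p}$ for some $p \in \P$ and some $t \in \{1,\ldots,n-(j-1)\}$. To find a most violated such inequality, I would fix $p$ and $t$, set $v := v^{j-1}_{p,t}$, and extend the knapsack DP by a residue coordinate $r \in \F_p$: let $f_i(w,r)$ be the minimum of $x(S \cap \{1,\ldots,i\})$ over subsets $S \subseteq \{1,\ldots,i\}$ whose members have accumulated weight $w$ (bucketed at $W$) and satisfy $\ip{v}{\chi(S)} \equiv r \pmod{p}$. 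The transition at step $i+1$ decides whether to include player $i+1$, updating $w$ and $r$ accordingly. After the final level, the algorithm sweeps over states with $r \ne 0$ to minimize $f_n(w,r) - \nu_w$, where $\nu_w := 1$ if $w \ge W$ and $\nu_w := 0$ otherwise, and compares the minimum against $\varepsilon$; standard DP traceback recovers a violating coalition whenever the minimum drops below $\varepsilon$.

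The DP table has size $O(n \cdot (W+1) \cdot p)$ with constant-time transitions. By Remark~\ref{rem:alg_prime_numbers} there are $|\P| = O(\log_2 n!) = O(n \log n)$ primes, each of unary size polynomial in $n$, and there are at most $n$ choices of $t$, so the total running time is polynomial in $n$, the $w_i$, $W$, and the binary encoding size of $(x,\varepsilon)$. The main subtlety is the discontinuity of $\nu$ at $w = W$; collapsing all accumulated weights at least $W$ into one weight-state makes $\nu$ a function of the DP state and resolves the issue. One further point to verify is that when $z^{j-1}_1,\ldots,z^{j-1}_{j-1}$ happen to be linearly dependent over some $p \in \P$, the corresponding basis vectors are the zero vector by convention, so the DP simply produces no candidate coalition for that $p$, which is consistent with Lemma~\ref{lem:lin_dependence} and the reformulation~\eqref{eq:mashler_lp_final}.
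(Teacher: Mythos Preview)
Your proposal is correct and follows essentially the same route as the paper: induction on $j$, with the base case handled by the knapsack DP of~\cite{Elkind2009}, and the inductive step handled by (i) invoking the induction hypothesis to separate over $x\in P_{j-1}(\varepsilon_{j-1})$ and (ii) for each pair $(p,t)$ augmenting the knapsack DP with a residue coordinate in $\F_p$ and sweeping over nonzero residues. The only cosmetic differences are that you collapse all weights $\ge W$ into one bucket (the paper keeps the full range $U=0,\ldots,w(N)$, which is equally pseudo-polynomial) and that you explicitly note the degenerate case $v^{j-1}_{p,t}=\mathbf{0}$, which the paper leaves implicit.
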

\begin{proof}
 Let us prove the statement of the lemma by induction on $j$. For $j=1$, we can compute the values $\gamma_{k,U}$, $k=1,\ldots,n$, $U=0,1,\ldots, w(N)$
$$
\gamma_{k, U}:=\min\{x(S)\,:\, S\subseteq \{1,\ldots,k\},\quad w(S)=U\}
$$ 
using dynamic programming in time polynomial in $n$, $w_i$, $i=1,\ldots,n$ and the size of the binary encoding of $x$. Now, to verify that $(x,\varepsilon)$ is a feasible solution for the linear program~\eqref{eq:mashler_lp_final} with $j=1$, we need to check that $\gamma_{n, U}\geq 1+\varepsilon$ for every $U\geq W$.

Now, let us assume that  we can solve the separation problem for~\eqref{eq:mashler_lp_final}, $j=1,\ldots,q$ and a point $(x,\varepsilon)$ in time polynomial in $n$, $w_i$, $i=1,\ldots,n$, $W$ and the size of the binary encoding of $(x,\varepsilon)$. Let us prove that the separation problem can be solved efficiently also for $j=q+1$ if $q<n$. For this, we solve the separation over each of the constraint families  
\begin{align*}
& x(S) \geq \nu(S) + \varepsilon \qquad \text{for
                          all}\quad S \subseteq N\\&\qquad\text{such that}\quad\ip{v^{j-1}_{p,t}}{\chi(S)}\not\equiv 0 \pmod p\,.
\end{align*}
indexed by $p\in \P$ and $t=1,\ldots,n-(j-1)$ independently.
To solve the separation problem for the above families, we can compute the values $\gamma_{k,g,U}$, $k=1,\ldots,n$, $g\in \F_p$, $U=0,1,\ldots, w(N)$
$$
\gamma_{k,g, U}:=\min\{w(S)\,:\, S\subseteq \{1,\ldots,k\},\quad v^{j-1}_{p,t}(S)\equiv g \pmod p,\quad w(S)=U\}
$$ 
using dynamic programming in time polynomial in $n$, $w_i$, $i=1,\ldots,n$ and the size of the binary encoding of $x$. Now, to verify that $(x,\varepsilon)$ is a feasible solution for the linear program~\eqref{eq:mashler_lp_final}, we need only to check that $\gamma_{n,g, U}\geq 1+\varepsilon$ for every $g\in \F_p$, $g\not\equiv 0\pmod p$ and $U\geq W$. Note, that due to the induction hypothesis we can solve the separation problem over the constraint $x\in P_{j-1}(\varepsilon_{j-1})$  in time polynomial in $n$, $w_i$, $i=1,\ldots,n$, $W$ and the size of the binary encoding of $x$, since  for this it is enough to find a constraint valid for the feasible region of $\bar{P}_{j-1}$ but violated by $(x,\varepsilon_{j-1})$.
\end{proof}

\subsection{Obtaining Encoding of Linear Space $\L_j$, $j=1,\ldots,n$ }\label{sec:update}

Note that we use Lemma~\ref{lem:separation} to solve the linear programs~\eqref{eq:mashler_lp_final}, $j=2,\ldots,n$ using ellipsoid method~\cite{Grotschel88}. Since the linear program~\eqref{eq:mashler_lp_final}  is bounded for every $j=1,\ldots,n$, one of the inequalities used in the ellipsoid method is of the form 
\[
x(S_j)\geq \nu(S_j)+\varepsilon\,,
\]
where $S_j\subseteq N$, $S_j\in \fix(P_j(\varepsilon_j))$ and $\chi(S_j)\not\in \L_{j-1}$. Such $S_j\subseteq N$ can be found by a trivial inspection of every linear inequality generated during the execution of the ellipsoid algorithm while solving~\eqref{eq:mashler_lp_final}. Note, that the condition $S_j\in \fix(P_j(\varepsilon_j))$ can be tested in time polynomial in $n$, $w_i$, $i=1,\ldots,n$ and $W$ due to Lemma~\ref{lem:separation}. Alternatively, such $S_j$ can be found using the support of an optimal solution for the dual linear program of~\eqref{eq:mashler_lp_final}, $j=2,\ldots,n$ and complementary slackness. Then, to obtain $z^j_1,\ldots,z^j_j$ from $z^{j-1}_1,\ldots,z^{j-1}_{j-1}$ we can define $z^j_1:=z^{j-1}_1$, $z^j_2:=z^{j-1}_2$, \ldots, $z^j_{j-1}:=z^{j-1}_{j-1}$ and $z^j_j:=\chi(S_j)$.

\section{Generalizations of the Result}
In a straightforward way our algorithm can be extended to compute the nucleolus of a game, corresponding to the intersection of a constant number of voting games, in pseudo-polynomial time. Moreover,  the nucleolus of a game can be computed in polynomial time, whenever this game is the intersection of $O(\log(n))$ voting games where the threshold of each voting game is bounded by a constant.

\section{Acknowledgments}
We would like to thank Dmitrii Pasechnik for pointing us to the problem of computing the nucleolus of weighted voting games in pseudo-polynomial time. We are also grateful to Jochen Koenemann and Justin Toth for helpful discussions.

\bibliography{bibliography}
\bibliographystyle{plain}

\end{document}